\theoremstyle{plain}
\newtheorem{theorem}{Theorem}[section]
\newtheorem{lemma}{Lemma}[section]
\newtheorem{corollary}{Corollary}[section]
\theoremstyle{definition}
\newtheorem{example}{Example}[section]
\begin{document}

\title{Finding Distinct Subpalindromes Online}

\author{
  Dmitry Kosolobov \\ dkosolobov@mail.ru \and 
  Mikhail Rubinchik \\ mikhail.rubinchik@gmail.com \and 
  Arseny M. Shur \\ arseny.shur@usu.ru}

\maketitle
\begin{center}
{Institute of Mathematics and Computer Science, Ural Federal University,\\
Ekaterinburg, Russia}
\end{center}

\begin{abstract}
We exhibit an online algorithm finding all distinct palindromes inside a given string in time $\Theta(n\log|\Sigma|)$ over an ordered alphabet and in time $\Theta(n|\Sigma|)$ over an unordered alphabet. Using a reduction from a dictionary-like data structure, we prove the optimality of this algorithm in the comparison-based computation model.

\textbf{Keywords: }{stringology, counting palindromes, subpalindromes, palindromic closure, online algorithm}

\end{abstract}

\algtext*{EndWhile}
\algtext*{EndIf}
\algtext*{EndFunction}
\algtext*{EndProcedure}
\algtext*{EndFor}

\section{Introduction}

A \emph{palindrome} is a string that is equal to its reversal. Palindromes are among the most interesting text regularities. During the last few decades, many algorithmic problems concerning palindromes were considered. In this paper we solve one problem that remained open.

There is a well known online algorithm by Manacher \cite{Man} that finds all maximal subpalindromes of a string in linear time and linear space (by a ``subpalindrome'' we mean a substring that is a palindrome). It is known \cite{DJP} that every string of length $n$ contains at most $n{+}1$ distinct subpalindromes, including the empty string. The following question arises naturally: \emph{can one find all distinct subpalindromes of a string in linear time and space?} In~\cite{GPR}, this question was answered in the affirmative, but with an offline algorithm. The authors stated the existence of the corresponding online algorithm as an open problem. Our main contribution is the following result.

\begin{theorem} \label{main}
Let $\Sigma$ be a finite unordered (resp., ordered) alphabet.
There exists an online algorithm which finds all distinct subpalindromes in a string over $\Sigma$ in $O(n|\Sigma|)$ (resp., $O(n\log|\Sigma|)$) time and linear space. This algorithm is optimal in the comparison based computation model.
\end{theorem}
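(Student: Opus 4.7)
The plan is to maintain incrementally the \emph{palindromic tree} of the prefix read so far: a tree whose nodes are exactly the distinct subpalindromes, with an edge labelled $c$ going from a palindrome $P$ to $cPc$, and with each node carrying a \emph{suffix link} to its longest proper palindromic suffix. Two roots, for the empty string and for an imaginary palindrome of length $-1$, let single characters appear as ordinary children. The state kept between reads is a pointer to the node representing the longest palindromic suffix of the current prefix.

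When the next character $c$ arrives, I would walk suffix links from the current node until I reach a node $v$ whose occurrence in the prefix is immediately preceded by $c$; then $cvc$ is the longest palindromic suffix of the extended prefix. If $v$ has no child labelled $c$, I create one, and this is the only step at which a new palindrome can appear. Hence the tree has at most $n{+}1$ nodes, matching the bound of~\cite{DJP}. A potential-function argument, with potential equal to the length of the current longest palindromic suffix, bounds the total number of suffix-link traversals over the whole input by $O(n)$, because every input character increases this length by at most two while each suffix-link step strictly decreases it.

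The only cost that depends on $|\Sigma|$ is the lookup ``does node $v$ have a child labelled $c$?'', performed $O(1)$ amortized times per input character. Storing children in arrays indexed by $\Sigma$ gives $O(|\Sigma|)$ per lookup; storing them in balanced search trees keyed by the character gives $O(\log|\Sigma|)$. This yields the announced time bounds, and since we store one node per distinct subpalindrome the total space stays $O(n)$.

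For optimality, I would reduce an abstract comparison-based dictionary problem on $|\Sigma|$ keys to the online subpalindrome problem: each dictionary operation is encoded as a short block of characters designed so that the creation (or non-creation) of a new distinct palindrome at the end of the block exactly encodes the answer of the operation. An online subpalindrome algorithm running in $o(n\log|\Sigma|)$ or $o(n|\Sigma|)$ would then yield a comparison-based dictionary beating the classical lower bounds $\Omega(\log|\Sigma|)$ per access (via comparison-tree arguments) and $\Omega(|\Sigma|)$ per access (via element-distinctness-style arguments), which is impossible. The main obstacle is designing the encoding carefully: each dictionary operation must use only $O(1)$ characters, the created palindrome must faithfully reflect the answer, and the bookkeeping performed internally by the palindrome algorithm must not be able to bypass the comparisons the dictionary would need, for otherwise the matching factor between the two lower bounds is lost.
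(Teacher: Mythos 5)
Your upper-bound argument takes a genuinely different route from the paper and is essentially sound. The paper couples a Manacher-type structure (returning the length of the maximal suffix-palindrome) with Ukkonen's suffix tree, detecting novelty by comparing that length with the length of the shortest unique suffix; you instead build the tree of distinct palindromes (with suffix links and two roots) directly, so that novelty is witnessed by the creation of a node, and your potential argument (length of the longest palindromic suffix grows by at most $2$ per letter, drops by at least $1$ per link traversal) correctly bounds the walk that locates the node $v$. The one step you omit is that the freshly created node $cvc$ must itself receive a suffix link (found by continuing the walk from the suffix link of $v$ to the next node preceded by $c$), without which later iterations cannot proceed; its cost needs a second, similar amortization (e.g., via suffix-link depth). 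Granting that, your approach buys a simpler, self-contained structure, whereas the paper's buys the novelty test as a standard suffix-tree byproduct; the per-letter child lookup gives the same $O(\log|\Sigma|)$ / $O(|\Sigma|)$ factors in both.

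The lower bound, however, has a genuine gap, and you name it yourself: the encoding is left as an ``obstacle'' rather than constructed. The paper's gadget is concrete: fix two letters $a,b$, track their membership with two boolean flags, and for every query $\mathrm{insqry}(x_i)$ with $x_i\notin\{a,b\}$ feed the block $a\,b\,x_i$ to the black box; since in a string of the form $abx_1abx_2\cdots abx_n$ every nonempty subpalindrome is a single letter, the number of distinct subpalindromes increases exactly when $x_i$ occurs for the first time, so each operation costs $O(1)$ letters and the answer is read off the output. Moreover, appealing to ``classical'' per-access dictionary bounds is insufficient here: one must exhibit a worst-case \emph{sequence} of $n$ operations of total cost $\Omega(n\log|\Sigma|)$ (resp.\ $\Omega(n|\Sigma|)$), i.e., rule out amortization. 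The paper does this by an adversary that keeps exactly the even-indexed letters of the ordered alphabet in the dictionary and repeatedly queries, for the current decision tree, an element requiring $\Omega(\log|\Sigma|)$ comparisons; since the dictionary's contents never change, this can be iterated $n$ times. Finally, your concern that the palindrome algorithm's internal bookkeeping could ``bypass'' the needed comparisons is not a real obstacle: once the reduction adds only $O(n)$ overhead, the dictionary sequence lower bound, being valid for \emph{every} comparison-based algorithm, applies to the composite algorithm and forces $f(m)=\Omega(\log m)$ (resp.\ $\Omega(m)$).
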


As a by-product, we get an online linear time and space algorithm that finds, for all prefixes of a string, the lengths of their maximal suffix-palindromes and of their palindromic closures.

\section{Notation and Definitions}

An alphabet $\Sigma$ is a finite set of letters. A \emph{string} $w$ over $\Sigma$ is a finite sequence of letters. It is convenient to consider a string as a function $w:\{1,2,\ldots,l\} \to \Sigma$. A \emph{period} of $w$ is any period of this function. The number $l$ is the \emph{length} of $w$, denoted by $|w|$. We write $w[i]$ for the $i$-th letter of $w$ and abbreviate $w[i]w[i{+}1]\cdots w[j]$ by $w[i..j]$.  A \emph{substring} of $w$ is any string $u$ such that $u=w[i..j]$ for some $i$ and $j$. Each occurrence of the substring $u$ in $w$ is determined by its \emph{position} $i$. If $i=1$ (resp. $j=|w|$), then $u$ is a \emph{prefix} (resp. \emph{suffix}) of $w$. A prefix (resp. suffix) of a string $w$ is called \emph{proper} if it is not equal to $w$. The string $w[|w|]w[|w|{-}1]\cdots w[1]$ is the \emph{reversal} of $w$, denoted by $\overleftarrow{w}$. A string is a \emph{palindrome} if it coincides with its reversal. A palindrome of even (resp. odd) length is referred to as an \emph{even} (resp. \emph{odd}) palindrome. If a substring, a prefix or a suffix of a string is a palindrome,  we call it a \emph{subpalindrome}, a \emph{prefix-palindrome}, or a \emph{suffix-palindrome}, respectively. The \emph{palindromic closure} of a string $w$ is the shortest palindrome $w'$ such that $w$ is a prefix of $w'$.

Let $w[i..j]$ be a subpalindrome of $w$. The number $\lfloor(i{+}j)/2\rfloor$ is the \emph{center} of $w[i..j]$, and the number $\lfloor(j{-}i{+}1)/2\rfloor$ is the \emph{radius} of $w[i..j]$. Thus, a single letter and the empty string are palindromes of radius 0. Note that the center of the empty subpalindrome is the previous position of the string.

By an \emph{online algorithm} for an algorithmic problem concerning strings we mean an algorithm that processes the input string $w$ sequentially from left to right, and answers the problem for each prefix $w[1..j]$ of $w$ after processing the letter $w[j]$.

\section{Distinct subpalindromes}
\subsection{Suffix-Palindromes and Palindromic Closure} \label{ssec:closure}

The problem of finding the lengths of palindromic closures for all prefixes of a string is closely related to the problem of finding all distinct subpalindromes of this string. It was conjectured in \cite{GPR} that there exists an online linear time algorithm for the former problem.

Let $v$ be the maximal suffix-palindrome of $w = uv$. It is easy to see that the palindromic closure of $w$ equals to the string $uv\overleftarrow{u}$. An offline algorithm for finding the maximal suffix-palindromes for each prefix of the string can be found, e. g., in \cite[Ch. 8]{CrRy}. Our online algorithm is a modification of Manacher's algorithm (see \cite{Man}).

We construct a data structure based on Manacher's algorithm. Let $\Delta$ be a boolean flag (needed to distinguish between odd and even palindromes). This data structure $\mathrm{man}$ contains a string $text$ and supports the procedure $\mathrm{man.AddLetter}(c)$ adding a letter to the end of $text$. The function $\mathrm{man.MaxPal}$ returns the length of maximal odd/even (according to $\Delta = 0/1$) suffix-palindrome of $text$.

Our data structure uses the following internal variables:\\
$n$, which is the length of $text$;\\
$i$, which is the center of the maximal odd/even (according to $\Delta = 0/1$) suffix-palindrome of $text$;\\
$Rad$, which is an array of integers such that for any $j<i$ the value $Rad[j]$ is equal to the radius of the maximal odd/even (according to $\Delta = 0/1$) subpalindrome with the center $j$. The main property of $Rad$ is expressed in the following lemma (see \cite[Lemma 8.1]{CrRy}).

\begin{lemma} \label{ManacherLemma}
Let $k$  be an integer, $1\le k\le Rad[i]$.\\
(1) If $Rad[i{-}k] < Rad[i] - k$ then $Rad[i{+}k] = Rad[i{-}k]$;\\
(2) if $Rad[i{-}k] > Rad[i] - k$ then $Rad[i{+}k] = Rad[i] - k$.
\end{lemma}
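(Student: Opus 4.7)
My plan is to exploit the symmetry of the palindrome $P = w[i-R..i+R]$ of radius $R = Rad[i]$ centered at $i$, which gives $w[i-j] = w[i+j]$ for all $0 \le j \le R$. Under this symmetry, for any $k$ with $1 \le k \le R$, any substring of $P$ centered at $i-k$ mirrors to a substring of $P$ centered at $i+k$ of the same length; since the mirror image of a palindrome is itself a palindrome, palindromes at $i-k$ that lie inside $P$ correspond bijectively to palindromes at $i+k$ that lie inside $P$ of the same radius.

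For Part (1), let $r = Rad[i-k]$, so by hypothesis $r + k < R$. The palindrome of radius $r$ at $i-k$ occupies $[i-k-r, i-k+r]$, which sits inside $[i-R, i+R]$ since $r+k < R$. Mirroring across $i$ yields a palindrome of radius $r$ at $i+k$, so $Rad[i+k] \ge r$. For the converse, if $Rad[i+k] \ge r+1$, then since $(r+1)+k \le R$ the palindrome of radius $r+1$ at $i+k$ also lies inside $P$, and mirroring it back produces a palindrome of radius $r+1$ at $i-k$, contradicting $Rad[i-k] = r$.

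For Part (2), set $r' = R - k$. Since $Rad[i-k] > r'$, the palindrome of radius $r'$ at $i-k$ exists and occupies $[i-R, i+R-2k]$, which touches the left boundary of $P$ and lies inside $P$; mirroring across $i$ gives a palindrome of radius $r'$ at $i+k$, whence $Rad[i+k] \ge r'$. The delicate part is the upper bound: a naive mirror argument fails because a hypothetical palindrome of radius $r'+1$ at $i+k$ would stick out of $P$ on the right, reaching position $i+R+1$. Instead, I would chain three letter equalities under the assumption $Rad[i+k] \ge r'+1$: from this palindrome, $w[i+R+1] = w[i+2k-R-1]$; from $P$ itself (applicable since $|2k-R-1| \le R$ when $1 \le k \le R$), $w[i+2k-R-1] = w[i-2k+R+1]$; and from the palindrome at $i-k$ of radius $\ge r'+1 = R-k+1$, $w[i-2k+R+1] = w[i-R-1]$. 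Concatenating, $w[i-R-1] = w[i+R+1]$, which would force $Rad[i] \ge R+1$, a contradiction.

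The main obstacle I anticipate is Part (2)'s upper bound, where one reflection across $i$ is insufficient and one must compose reflections across two distinct centers ($i-k$ and $i$) to reach the maximality-of-$P$ contradiction. A subsidiary bookkeeping issue is checking that each of the positions $i \pm R \pm 1$ and $i \pm 2k \mp R \mp 1$ appearing in the chain actually lies in $\{1,\dots,n\}$, but this is automatic: both hypotheses $Rad[i+k] \ge r'+1$ and $Rad[i-k] \ge r'+1$ force the respective extremal positions to be inside the string.
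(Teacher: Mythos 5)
Your proof is correct and is essentially the standard reflection argument that the paper itself does not reproduce but delegates to the cited Lemma~8.1 of Crochemore--Rytter: one reflection for part~(1) and the lower bound of part~(2), and the three-step chain $w[i{+}R{+}1]=w[i{+}2k{-}R{-}1]=w[i{-}2k{+}R{+}1]=w[i{-}R{-}1]$ (reflections across $i{+}k$, $i$, $i{-}k$) for the delicate upper bound of part~(2), with the boundary positions guaranteed to exist exactly as you note. The only cosmetic remark is that your indices are written for odd palindromes ($\Delta=0$); the even case is identical after shifting the reflection point to $i+\tfrac12$.
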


At the beginning, $Rad$ is filled with zeros, $n = 1$, $i = 2$, $text = "\$"$, where $\$$ is a special letter that does not appear in the input string\begin{footnote}{The strange-looking initial value of $i$ provides the correct processing of the first letter after $\$$ (the while loop will be skipped and the correct values $n=i=2$ for the next iteration will be obtained).}\end{footnote}.

\begin{algorithmic}[1]
	\Procedure{man.AddLetter}{$c$}
		\State $s \gets i - Rad[i] + \Delta$\Comment{position of the max suf-pal of $text[1..n]$} \label{lst:line:calcs}
        \State $text[n+1] \gets c$
		\While{$i + Rad[i] \leqslant n$}\label{lst:line:forbeg}
			\State $Rad[i] \gets \min(Rad[s{+}n{-}i{-}\Delta], n - i)$\Comment{this is $Rad[i]$ in $text[1..n]$}\label{lst:line:min}
			\If{$i + Rad[i] = n \mathrel{\mathbf{and}} text[i{-}Rad[i]{-}1{+}\Delta] = c$}\label{lst:line:breadth}
				\State $Rad[i] \gets Rad[i] + 1$\Comment{extending the max suf-pal}
				\State $\mathbf{break}$     \Comment{max suf-pal of $text[1..n{+}1]$ found}
			\EndIf
            \State $i \gets i + 1$ \Comment{next candidate for the center of max suf-pal}
		\EndWhile\label{lst:line:forend}
		\State $n \gets n + 1$
	\EndProcedure
	\Function{man.MaxPal}{}
		\State \Return $2 Rad[i] + 1 - \Delta$\label{lst:line:maxpal}
	\EndFunction
\end{algorithmic}

\begin{theorem} \label{PalindromeClosure}
There exists an online linear time and space algorithm that finds the lengths of the maximal suffix-palindromes of all prefixes of a string.
\end{theorem}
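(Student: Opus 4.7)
The plan is to verify correctness of the $\mathrm{man.AddLetter}$ procedure and then bound its cost by an amortized argument; the theorem then follows by running two copies of the data structure in parallel with $\Delta=0$ and $\Delta=1$ and returning the larger of the two values reported by $\mathrm{man.MaxPal}$. The sentinel $\$$ at position $1$ prevents any legitimate suffix-palindrome from crossing the left boundary.

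For correctness I would maintain the invariant already stated just above the pseudocode: before each call, $i$ is the centre of the maximal same-parity suffix-palindrome of $text[1..n]$, and $Rad[j]$ stores its true radius for every $j<i$. Let $i_0$ and $R_0 = Rad[i_0]$ denote the values at entry to the while loop. The invariant yields $n = i_0 + R_0$ and $s = i_0 - R_0 + \Delta$, so the index $s + n - i - \Delta$ on line~\ref{lst:line:min} collapses to $2 i_0 - i$, the mirror of the candidate $i$ through $i_0$ in both parities. Lemma~\ref{ManacherLemma} then certifies $\min(Rad[2 i_0 - i],\,n - i)$ as the true radius at $i$ inside $text[1..n]$, and whenever this value falls strictly below $n - i$ it also rules out further extension, so $i$ may be safely discarded. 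Otherwise the palindrome at $i$ ends exactly at position $n$, and line~\ref{lst:line:breadth} tests whether the appended letter $c$ matches its would-be mirror; the first $i$ at which this test succeeds must be the new centre, because any longer suffix-palindrome of $text[1..n+1]$ would have a smaller centre and would have triggered the test earlier. Termination is automatic, since the loop either breaks or pushes $i$ up to $n+1$, where the guard $i + Rad[i] \le n$ fails while the default value $Rad[n+1] = 0$ already encodes the trivial suffix-palindrome.

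The complexity analysis hinges on monotonicity of $i$. Let $P$ and $Q$ be the maximal same-parity suffix-palindromes of $text[1..n]$ and $text[1..n+1]$, respectively. The interior $Q[2..|Q|-1]$ is itself a same-parity suffix-palindrome of $text[1..n]$, forcing $|Q| \le |P| + 2$; translating this inequality to centres yields that the centre of $Q$ is never smaller than the centre of $P$. Hence $i$ is non-decreasing across the whole execution, and since every iteration of the while loop either breaks or strictly increments $i$, the cumulative number of iterations over all $\mathrm{AddLetter}$ calls is at most the final value of $i$, which is $O(n)$. With $O(1)$ work per iteration and $O(n)$ storage for the array $Rad$, this gives the claimed linear time and space bounds.

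The main obstacle I anticipate is this monotonicity claim together with verifying that the indices $s + n - i - \Delta$ on line~\ref{lst:line:min} and $i - Rad[i] - 1 + \Delta$ on line~\ref{lst:line:breadth} stay in the legal range throughout the execution. This is precisely where the sentinel $\$$ and the unusual initial values $n = 1$, $i = 2$ earn their keep, since they make every access well-defined already during the processing of the very first non-sentinel letter.
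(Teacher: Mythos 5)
Your proposal is correct and follows essentially the same route as the paper: correctness via the Manacher invariant together with Lemma~\ref{ManacherLemma}, and linear time by amortizing the while-loop iterations against the monotone growth of $i$, run over the two copies $\Delta=0$ and $\Delta=1$. The only nit is that the cumulative iteration count is bounded by the final value of $i$ \emph{plus} one break iteration per call (the paper's ``$k$ iterations increase $i$ by $k-1$'' accounting), which still yields the $O(n)$ bound.
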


\begin{proof}
From the correctness of Manacher's algorithm (see \cite{Man}) and Lemma~\ref{ManacherLemma} it follows that the function $\mathrm{man.MaxPal}$ correctly returns the length of the maximal odd/even suffix palindrome of the processed string. For a string of length $n$, we call the procedure $\mathrm{man.AddLetter}$\ $n$ times with the parameter $\Delta=0$ and $n$ times with $\Delta=1$. If one call of the procedure uses $k$ iterations of the loop in the lines \ref{lst:line:forbeg}--\ref{lst:line:forend}, then the value of $i$ increases by $k{-}1$. Hence, the loop is used at most $4n$ times in total. Apart from this loop, $\mathrm{man.AddLetter}$ performs a constant number of operations. This gives us the required $O(n)$ time bound.
\end{proof}

\begin{corollary}
There exists an online linear time and space algorithm that finds the lengths of palindromic closured of all prefixes of a string.
\end{corollary}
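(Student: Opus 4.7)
The plan is to reduce this corollary directly to Theorem~\ref{PalindromeClosure}. Recall the observation made in Section~\ref{ssec:closure}: if $v$ is the maximal suffix-palindrome of $w=uv$, then the palindromic closure of $w$ is $uv\overleftarrow{u}$, so its length equals $|u|+|v|+|u| = 2|w|-|v|$. Hence knowing the length of the maximal suffix-palindrome of a prefix is equivalent, up to one subtraction, to knowing the length of that prefix's palindromic closure.

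First I would invoke Theorem~\ref{PalindromeClosure} to obtain the online linear time and space procedure that, after reading $w[j]$, delivers the length $\ell_j$ of the maximal suffix-palindrome of $w[1..j]$. Then I would wrap it in a trivial online routine that, upon receiving $w[j]$, calls $\mathrm{man.AddLetter}(w[j])$ (run in parallel for $\Delta=0$ and $\Delta=1$, taking the larger of the two outputs of $\mathrm{man.MaxPal}$ as $\ell_j$) and outputs $2j-\ell_j$ as the length of the palindromic closure of $w[1..j]$.

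The resulting algorithm is online because each answer is produced immediately after $w[j]$ is read, and it runs in linear time and space since it performs only $O(1)$ additional arithmetic per letter on top of the two parallel copies of $\mathrm{man}$ supplied by Theorem~\ref{PalindromeClosure}. There is no real obstacle here; the only thing to double-check is the edge case of the empty suffix-palindrome, where $\ell_j=0$ gives the closure length $2j$, which matches the fact that the shortest palindrome having a letter-avoiding prefix of length $j$ with no nontrivial suffix-palindrome is $w[1..j]\overleftarrow{w[1..j]}$.
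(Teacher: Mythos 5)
Your reduction is exactly the intended one: the paper states the corollary without proof precisely because the identity ``palindromic closure of $w=uv$ is $uv\overleftarrow{u}$, hence has length $2|w|-|v|$'' combined with Theorem~\ref{PalindromeClosure} gives the result immediately, which is what you do. (Minor remark: the edge case $\ell_j=0$ never arises for a nonempty prefix, since the last letter is always a suffix-palindrome, so that check is vacuous.)
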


\begin{example}
Let $w = abadaadcaa$ and consider the state of the data structure $\mathrm{man}$ after the sequence of calls $\mathrm{man.AddLetter}(w[i])$, $i = 1,2,\ldots,10$.
$$
\begin{array}{l}
text=\$w;\\
Rad=(0, 1, 0, 1, 0, 0, 0, 0, 0, 0)\text{ for }\Delta=0;\\
Rad=(0, 0, 0, 0, 2, 0, 0, 0, 1, 0)\text{ for }\Delta=1;\\
\end{array}
$$
The calls to $\mathrm{man.MaxPal}$ after each call to $\mathrm{man.AddLetter}(w[i])$ return consequently the values $1, 1, 3, 1, 3, 1, 1, 1, 1, 1$ for the case $\Delta = 0$ and $0, 0, 0, 0, 0, 2, 4, 0, 0, 2$ for the case $\Delta = 1$.
\end{example}

\subsection{Distinct subpalindromes} \label{ssec:distinct}

We make use of the following

\begin{lemma}[\cite{GPR}]
Each subpalindrome of a string is the maximal suffix-palindrome of some prefix of this string.
\end{lemma}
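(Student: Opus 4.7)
The plan is to pick the \emph{first} occurrence of the given subpalindrome and show that this occurrence is the maximal suffix-palindrome of the prefix it defines. Concretely, let $p$ be a subpalindrome of $w$ and let $j$ be the smallest position such that $w[j-|p|+1..j]=p$, i.e., the end position of the leftmost occurrence of $p$ in $w$. I claim $p$ is the maximal suffix-palindrome of the prefix $w[1..j]$.

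The verification is by contradiction. Suppose $w[1..j]$ has a strictly longer suffix-palindrome $q$. Since both $p$ and $q$ are suffixes of $w[1..j]$ and $|q|>|p|$, $p$ is a suffix of $q$. Because $q$ is a palindrome, $\overleftarrow{p}$ is a prefix of $q$; but $p$ itself is a palindrome, so $p=\overleftarrow{p}$ is a prefix of $q$. Hence $p$ occurs in $w$ at position $j-|q|+1$, ending at position $j-|q|+|p|<j$. This contradicts the minimality of $j$, proving the claim.

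The key idea, and the only mildly delicate point, is the symmetry argument that a palindrome contained as a suffix of another palindrome must also appear as its prefix. Everything else is a direct consequence of choosing the earliest occurrence; no case analysis on parity or on lengths is needed, so I do not anticipate any real obstacle here.
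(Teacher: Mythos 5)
Your proof is correct and is exactly the standard argument (the one given in the cited source \cite{GPR}): take the leftmost occurrence of the palindrome, and use the fact that a palindromic proper suffix of a palindrome is also a prefix of it to contradict minimality. The only (trivial) case outside your argument is the empty subpalindrome, which is the maximal suffix-palindrome of the empty prefix.
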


This lemma implies that the online algorithm designed in Sect.~\ref{ssec:closure} finds all subpalindromes of a string. To find all distinct subpalindromes, we have to verify whether the maximal suffix-palindrome of a string has another occurrence in this string. Note that the direct comparison of substrings for this purpose leads to at least quadratic overall time. Instead, we will use a version of suffix tree known as \emph{Ukkonen's tree}. To introduce it, we need some definitions.

A \emph{trie} is a rooted labelled tree in which every edge is labelled with a letter such that all edges leading from a vertex to its children have different labels. Each vertex of the trie is associated with the string labelling the path from the root to this vertex. A trie can be ``compressed'' as follows: any non-branching descending path is replaced by a single edge labelled by the string equal to the label of this path. The result of this procedure is called a \emph{compressed trie}. For a set $S$ of strings, the \emph{compressed trie of} $S$ is defined by the following two properties: (i) for each string of $S$, there is a vertex associated it and (ii) the trie has the minimal number of vertices among all compressed tries with property (i).

A (compressed) \emph{suffix tree} is the compressed trie of the set of all suffixes of a string. \emph{Ukkonen's tree} is the data structure $\mathrm{ukk}$ containing a string and the suffix tree of this string (labels are stored as pairs of positions in the string). Ukkonen's tree allows one to add a letter to the end of the string (procedure $\mathrm{ukk.addLetter}(c)$), updating the suffix tree. We also need the following parameter: the length of the minimal suffix of the processed string such that this suffix occurs in this string only once (function $\mathrm{ukk.minUniqueSuff}$). Let us recall some implementation details of Ukkonen's tree for the efficient implementation of $\mathrm{ukk.minUniqueSuff}$.

The update of Ukkonen's tree is based on the system of suffix links. Such a link connects a vertex associated with a word $v$ to the vertex associated with the longest proper suffix of $v$. These links are also defined for ``implicit'' vertices (the vertices that are not in the compressed trie, but present in the corresponding trie). In particular, Ukkonen's tree supports the triple $(v,e,i)$ such that

\tabcolsep=3pt
\begin{tabular}{rl}
(1)&$v$ is a vertex (associated with some string $s'$) of the current suffix tree,\\
(2)&$e$ is an edge (labelled by some string $s$) between $v$ and its child,\\
(3)&$i$ is an integer between $0$ and $|s|$,
\end{tabular}

\noindent with the property that $s's[1..i]$ is the longest suffix of the processed string that occurs in this string at least twice. This triple is crucial for  fast update of Ukkonen's tree (for further details, see \cite{Ukk}).

\begin{lemma}[\cite{Ukk}] \label{lemukk}
The procedure $\mathrm{ukk.addLetter}(c)$ performs $n$ calls using $O(n)$ space and $O(n\log|\Sigma|)$ (resp., $O(n|\Sigma|)$) time in the case of ordered (resp., unordered) alphabet.
\end{lemma}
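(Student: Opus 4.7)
Since Lemma~\ref{lemukk} is Ukkonen's classical result, the plan is to recall his online suffix-tree construction and establish the bounds by an amortized analysis of the active-point walk. I would begin by describing the invariant maintained by the triple $(v,e,i)$: after processing a string $t$, this triple marks the longest proper suffix of $t$ that still has another occurrence in $t$. On a call $\mathrm{ukk.addLetter}(c)$, one performs a sequence of extensions: check whether the character immediately after the active point equals $c$; if so, advance the active point by one and stop, because all shorter repeated suffixes are then implicitly extended as well. Otherwise, split the edge at the active point if necessary, attach a new leaf for the occurrence of $c$, follow the suffix link from the (possibly newly created) internal node to the next shorter suffix, recanonize the active point by descending along matched edge labels, and repeat until the remaining suffix is already present in the tree.

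For the time bound, define the depth of the active point as the length of the string it represents. Each call makes at most one match-and-stop step; every other step creates a new leaf and follows a suffix link that decreases the depth of the carrying node by at most one. Since at most $n$ leaves are ever created, the total number of suffix-link traversals across all calls is $O(n)$, and hence so is the number of explicit node-level operations. Canonization hops are amortized against the same depth: every downward hop is paid for by a previous match that raised the active-point depth (or by a just-executed suffix-link traversal that one still has credit for), so the total canonization work across the entire run is $O(n)$ node visits as well.

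At each visited node, one must locate the child whose edge label begins with a specified letter. For an ordered alphabet one stores the children of each node in a balanced BST keyed by their first letters, giving $O(\log|\Sigma|)$ per lookup and hence $O(n\log|\Sigma|)$ total; for an unordered alphabet a linear list (or unsorted dictionary) per node gives $O(|\Sigma|)$ per lookup and $O(n|\Sigma|)$ total. The space bound follows because each call creates at most one internal node (from an edge split) and one leaf, so the tree has $O(n)$ vertices; edge labels are stored as two integers each, and the children of all nodes together form the edge set, so the overall storage is $O(n)$ words.

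The main delicate point is the canonization analysis: naively, a single call could walk through arbitrarily many implicit nodes while canonizing down long edge labels, so one must argue via a potential function (the active-point depth) that the cumulative number of such hops across all $n$ calls is $O(n)$, rather than $O(n)$ per call in the worst case. Once this potential argument is in place, multiplying the $O(n)$ node-level operations by the alphabet-dependent cost of a single child lookup yields exactly the two bounds claimed in the lemma.
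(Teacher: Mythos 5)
The paper does not prove this lemma at all: it is imported verbatim from Ukkonen's paper \cite{Ukk}, so there is no internal argument to compare against. Your sketch is a faithful reconstruction of the classical analysis that the citation stands for --- bounding explicit extensions by the number of leaves, charging suffix-link traversals and canonization hops to a depth/left-pointer potential so that node-level work is $O(n)$ overall, and multiplying by the $O(\log|\Sigma|)$ (balanced-BST) or $O(|\Sigma|)$ (list) cost of a child lookup, with $O(n)$ space from storing labels as index pairs --- and it is correct at the level of detail appropriate for this lemma.
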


We modify Ukkonen's tree, associating with each vertex $u$ an additional field $u.\mathrm{depth}$ to store the length of the string associated with $u$. Maintaining this field requires a constant number of operations at the moment when $u$ is created. Thus, this update adds $O(n)$ time and $O(n)$ space to the total cost of maintaining Ukkonen's tree. Thus, Lemma~\ref{lemukk} holds for the modified Ukkonen's tree as well. It remains to note that $\mathrm{ukk.minUniqueSuff}=v.\mathrm{depth}+i+1$.

\begin{proof}[Theorem~\ref{main}: existence]
The following algorithm solves the problem and has the required complexity. The algorithm uses data structures $\mathrm{man}$ and $\mathrm{ukk}$, processing the same input string $w$. The next (say, $n$th) symbol of $w$ is added to both structures through the procedures $\mathrm{man.AddLetter}$ and $\mathrm{ukk.AddLetter}$. After this, we call $\mathrm{man.MaxPal}$ to get the length of the maximal palindromic suffix of $w[1..n]$ and $\mathrm{ukk.MinUniqueSuff}$ to get the length of the shortest suffix of $w[1..n]$ that never occurred in $w$ before. The inequality $\mathrm{man.MaxPal}\ge \mathrm{ukk.MinUniqueSuff}$ means the detection of a new palindrome; we get its first and last positions from the structure $\mathrm{man}$ and output them. In the case of the opposite inequality, there is no new palindrome, and we output ``---''.

The required time and space bounds follow from Theorem~\ref{PalindromeClosure} and Lemma~\ref{lemukk}.
\end{proof}

\begin{example}
Consider the string $w = abadaadcaa$ again. We get the following results for $i = 1,2,\ldots,10$:
$$
\arraycolsep=3pt
\begin{array}{lllllllllll}
\mathrm{man.MaxPal}:&1&1&3&1&3&2&4&1&1&2\\
\mathrm{ukk.MinUniqueSuff}:&1&1&2&1&2&2&3&1&2&3\\
\mathrm{output}:&1{-}1&2{-}2&1{-}3&4{-}4&3{-}5&5{-}6&4{-}7&8{-}8&\text{---}&\text{---}
\end{array}
$$
\end{example}

\subsection{Lower bounds}

Recall that a \emph{dictionary} is a data structure $D$ containing some set of elements and designed for the fast implementation of basic operations like checking the membership of an element in the set, deleting an existing element, or adding a new element. Below we consider an \emph{insert-only dictionary over a set $S$}. In each moment, such a dictionary $D$ contains a subset of $S$ and supports only the operation $\mathrm{insqry}(x)$. This operation checks whether the element $x\in S$ is already in the dictionary; if no, it adds $x$ to the dictionary.

\begin{lemma} \label{DictionaryLowerBound}
Suppose that the alphabet $\Sigma$ consists of indivisible elements, $n\ge |\Sigma|$, and the insert-only dictionary $D$ over $\Sigma$ is initially empty. Then the sequence of $n$ calls of $\mathrm{insqry}$ requires, in the worst case, $\Omega(n\log|\Sigma|)$ time if $\Sigma$ is ordered and $\Omega(n|\Sigma|)$ if $\Sigma$ is unordered.
\end{lemma}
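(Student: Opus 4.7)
The plan is to give an adversary argument. The adversary constructs the sequence of $n$ \textrm{insqry} calls adaptively in response to the algorithm's comparisons, producing an input that forces the stated number of operations.

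The heart of the argument is a per-operation lower bound: any single \textrm{insqry} against a dictionary of size $k$ needs $\Omega(\log k)$ comparisons in the ordered case and $\Omega(k)$ equality tests in the unordered case. For this, I set up a local adversary that keeps a set $S$ of plausible identities for the current query element $y$, comprising the $k$ stored elements plus a ``new'' option. The adversary never answers ``$=$'' voluntarily: in the ordered case it returns ``$<$'' or ``$>$'' so as to keep $|S|$ as large as possible, which shrinks the existing-candidate portion of $S$ by at most a factor of two per comparison; in the unordered case it returns ``$\ne$'' and removes exactly one existing candidate. As long as $S$ still contains both an existing candidate and the ``new'' option, the adversary can commit either way, so the algorithm cannot safely output anything, forcing $\Omega(\log k)$ resp.\ $\Omega(k)$ comparisons.

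The global adversary then proceeds in two phases. In phase~1, operations $1,\dots,|\Sigma|$ insert $|\Sigma|$ distinct elements; the $i$-th operation faces a dictionary of size $i-1$, contributing $\Omega(\log i)$ resp.\ $\Omega(i)$ comparisons, for totals $\Omega(|\Sigma|\log|\Sigma|)$ and $\Omega(|\Sigma|^{2})$. In phase~2 the remaining $n-|\Sigma|$ queries all repeat elements already present, so the dictionary has size $|\Sigma|$ throughout, contributing $\Omega((n-|\Sigma|)\log|\Sigma|)$ resp.\ $\Omega((n-|\Sigma|)|\Sigma|)$. Adding the two phases and using $n\ge|\Sigma|$ yields the claimed $\Omega(n\log|\Sigma|)$ and $\Omega(n|\Sigma|)$ bounds.

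The main obstacle will be the per-operation bound in phase~2, since there every insertion is in fact ``existing'' and one might worry the algorithm can predict this and skip comparisons. I handle it by having the adversary defer, within a single operation, its choice of which stored element $y$ equals: it answers ``$=$'' only when a comparison forces it to, and a forcing comparison arises only after the algorithm has narrowed the candidate set down, which by the per-operation argument costs $\Omega(\log|\Sigma|)$ resp.\ $\Omega(|\Sigma|)$ comparisons. Consistency across all $n$ operations is preserved by maintaining a single partial order on $\Sigma$ (ordered case) or a single set of equality/inequality constraints (unordered case) and extending it monotonically as new answers are given.
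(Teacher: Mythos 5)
Your overall architecture matches the paper's: charge $\Omega(\log|\Sigma|)$ (resp.\ $\Omega(|\Sigma|)$) per operation by repeatedly querying elements that are already stored, so the dictionary never changes. But there are two genuine gaps. First, your phase~1 inserts \emph{all} $|\Sigma|$ elements, which destroys phase~2: once the algorithm has correctly reported $|\Sigma|$ distinct insertions, it knows $D=\Sigma$, so every subsequent $\mathrm{insqry}$ can be answered ``yes'' with zero comparisons. Your local adversary's invariant --- that $S$ contains both an existing candidate and the ``new'' option --- is vacuously unattainable when no new element exists. You must leave roughly half of $\Sigma$ outside the dictionary forever.

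Second, and more fundamentally for the ordered case, the per-operation claim ``any $\mathrm{insqry}$ against a dictionary of size $k$ needs $\Omega(\log k)$ comparisons'' is false as a function of $k$ alone: if $D=\{a_1<\dots<a_k\}$ is an order-prefix of $\Sigma$, a single comparison of the query $y$ with $a_k$ decides membership, because after that one answer either every remaining candidate is in $D$ or none is --- your halving argument keeps many \emph{elements} alive but they all give the same answer to the query, so the algorithm may stop. The cost depends on how $D$ sits inside the order, and your adversary never controls this. The paper's key idea, which is what you are missing, is to fix $D$ to be exactly the even-indexed letters $a_2,a_4,\dots$, so that every adjacent pair $a_i,a_{i+1}$ straddles the membership boundary; then any correct decision tree must separate all $m$ letters and hence has depth $\Omega(\log m)$, and some already-present letter realizes that depth at every step. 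With that choice of $D$ (and the corresponding repair of phase~1), your two-phase accounting and the unordered-case count of $\lfloor m/2\rfloor$ equality tests go through and recover the paper's proof.
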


\begin{proof}
Let $\Sigma = \{a_1<a_2<\ldots<a_m\}$ be an ordered alphabet. Assume that on some stage all letters with even numbers are in the dictionary, while all elements with odd numbers are not. Consider the next operation. In the comparison-based computation model, a query ``$x\in D$?'' is answered by some decision tree; each node of this tree is marked by the condition ``$x<a_i$'' for some $i$. To distinguish between $a_i$ and $a_{i+1}$, the tree should contain the nodes for both $a_i$ and $a_{i+1}$. Now note that for any $i$, exactly one of the letters $a_i$ and $a_{i+1}$ belongs to $D$. So, to answer correctly all possible queries ``$x\in D$?'' the decision tree should have nodes for all letters. Then the depth of this tree is $\Omega(\log m)$. Therefore, for some element $x=a_{2i}$ the number of comparisons needed to prove that $x\in D$ is $\Omega(\log m)$. After processing $x$, the content of the dictionary remains unchanged. The decision tree can change, but it does not matter: we again choose the next letter to be the one having an even number and requiring $\Omega(\log m)$ comparisons to prove its membership in $D$. Thus, our ``bad'' sequence of calls is as follows: it starts with $\mathrm{insqry}(a_2),\ldots,\mathrm{insqry}(a_{2\lfloor m/2\rfloor})$, and continues with the ``worst'' letter, described above, on each next step. Even if the first $\lfloor m/2\rfloor$ calls can be performed in $O(1)$ time each, the overall time is $\Omega(n\log m)$, as required.

In the case of unordered alphabet all conditions in the decision tree have the form ``$x = a_i$''. It is clear that if the dictionary contains $\lfloor m/2\rfloor$ elements, the maximal number of comparisons equals $\lfloor m/2\rfloor$ as well. Choosing the bad sequence of calls in the same way as for the ordered alphabet, we arrive at the required bound $\Omega(nm)$.
\end{proof}

Before finishing the proof of Theorem~\ref{main} we mention the following lemma. Its proof is obvious.

\begin{lemma} \label{abx}
Suppose that $a,b$ are two different letters and $w=abx_1abx_2\cdots abx_n$ is a string such that each $x_i$ is a letter different from $a$ and $b$. Then all nonempty subpalindromes of $w$ are single letters.
\end{lemma}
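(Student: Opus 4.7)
The plan is to establish that $w$ contains no subpalindrome of length $2$ or $3$, from which the claim follows by a standard centering argument: any palindrome of length $\ell \ge 4$ contains either a length-$2$ central factor (when $\ell$ is even, since $p[\ell/2]=p[\ell/2{+}1]$) or a length-$3$ central factor (when $\ell$ is odd), and this factor is itself a palindrome of $w$. Hence it suffices to rule out palindromic factors of lengths $2$ and $3$.

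First I would classify the positions of $w$ by their residue modulo $3$: positions $\equiv 1$ carry $a$, positions $\equiv 2$ carry $b$, and positions $\equiv 0$ carry some $x_i$. Every length-$2$ window of $w$ is therefore one of $ab$, $bx_i$, or $x_ia$, and in each case the two letters differ by the hypotheses $a\ne b$, $b\ne x_i$, $x_i\ne a$. So no length-$2$ subpalindrome exists.

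Next, the three possible length-$3$ windows (indexed by starting residue modulo $3$) are $abx_i$, $bx_ia$, and $x_iab$. In each of them the first and third letters differ ($a\ne x_i$, $b\ne a$, and $x_i\ne b$ respectively), so none is a palindrome. Combining the two enumerations with the centering reduction yields the lemma.

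Since the author labels the proof as obvious, I do not expect any real obstacle; the entire content is the residue-modulo-$3$ window enumeration plus the centering remark that reduces longer palindromes to the two short cases.
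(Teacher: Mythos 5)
Your proof is correct and complete: the mod-$3$ position analysis rules out palindromic factors of lengths $2$ and $3$, and the centering observation (every palindrome of length at least $2$ contains a central palindromic factor of length $2$ or $3$) reduces the general case to these. The paper gives no argument at all for this lemma, declaring it obvious, and your write-up is exactly the routine verification that claim implicitly appeals to.
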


\begin{proof}[Proof of Theorem~\ref{main}: lower bounds]
We prove the required lower bounds reducing the problem of maintaining an insert-only dictionary to counting distinct palindromes in a string. Assume that we have a black box algorithm that processes an input string letter by letter and outputs, after each step, the number of distinct palindromes in the string read so far. The time complexity of this algorithm depends on the length $n$ of the string at least linearly, and a linear in $n$ algorithm does exist, as we have proved in the Sect.~\ref{ssec:distinct}. Thus, we can assume that the considered black box algorithm works in time $O(n\cdot f(m))$, where $m$ is the size of the alphabet of the processed string and the function $f(m)$ is non-decreasing.

The insert-only dictionary over a set $\Sigma$ of size $m>1$ can be maintained as follows. We pick up two letters $a,b\in\Sigma$ and mark their presence in the dictionary using two boolean variables, $z_a$ and $z_b$. All other letters are processed with the aid of the mentioned black box. Let us describe how to process a sequence of $n$ calls $\mathrm{insqry}(x_1),\ldots,\mathrm{insqry}(x_n)$ starting from the empty dictionary.

For each call, we first compare the current letter $x_i$ to $a$ and $b$. If $x_i=a$, then $z_a$ is the answer to the query ``$x_i\in D$?''; after answering the query we set $z_a=1$. The case $x_i=b$ is managed in the same way.

If $x_i\notin\{a,b\}$, we feed the black box with $a$, $b$, and $x_i$ (in this order). Then we get the output of the black box and check whether the number of distinct subpalindromes in its input string increased. By Lemma~\ref{abx}, the increase happens if and only if $x_i$ appears in the input string of the black box for the first time. Thus, we can immediately answer the query ``$x_i\in D$?'', and, moreover, $x_i$ is now in the dictionary.

The described algorithm performs the sequence of calls $\mathrm{insqry}(x_1),\ldots,\mathrm{insqry}(x_n)$ in time $O(n)$ plus the time used by the blackbox to process a string of length $\le 3n$ over $\Sigma$. Hence, the overall time bound is $O(n\cdot f(m))$. In view of Lemma~\ref{DictionaryLowerBound} we obtain $f(m)=\Omega(\log m)$ (resp., $f(m)=\Omega(m)$) in the case of ordered (resp., unordered) alphabet $\Sigma$. The required lower bounds are proved.
\end{proof}

\bibliographystyle{amsplain}          
\bibliography{unique_palindromes_eng2}            

\providecommand{\bysame}{\leavevmode\hbox to3em{\hrulefill}\thinspace}
\providecommand{\MR}{\relax\ifhmode\unskip\space\fi MR }
\providecommand{\MRhref}[2]{%
  \href{http://www.ams.org/mathscinet-getitem?mr=#1}{#2}
}
\providecommand{\href}[2]{#2}
\begin{thebibliography}{1}

\bibitem{CrRy}
M.~Crochemore and W.~Rytter, \emph{Jewels of stringology}, World Scientific
  Publishing Co. Pte. Ltd., 2002.

\bibitem{DJP}
X.~Droubay, J.~Justin, and G.~Pirillo, \emph{Episturmian words and some
  constructions of de luca and rauzy}, Theoret. Comput. Sci. \textbf{255}
  (2001), 539--553.

\bibitem{GPR}
R.~Groult, E.~Prieur, and G.~Richomme, \emph{Counting distinct palindromes in a
  word in linear time}, Inform. Process. Lett. \textbf{110} (2010), 908--912.

\bibitem{Man}
G.~Manacher, \emph{A new linear-time on-line algorithm finding the smallest
  initial palindrome of a string}, J. ACM \textbf{22} (1975), no.~3, 346--351.

\bibitem{Ukk}
E.~Ukkonen, \emph{On-line construction of suffix trees}, Algorithmica
  \textbf{14} (1995), no.~3, 249--260.

\end{thebibliography}

\end{document}